\tikzstyle{block} = [draw, fill=white, rectangle,
\tikzstyle{input} = [coordinate]
\tikzstyle{output} = [coordinate]
\newcommand{\E}{ \mathbb{E} } % expectation 
\let\leq\leqslant
\let\geq\geqslant
\let\phi\varphi
\newcommand{\card}{\ensuremath M}
\newtheorem{theorem}{Theorem}
\newtheorem{lemma}{Lemma}
\newtheorem{corollary}{Corollary}
\newtheorem{remark}{Remark}
\newtheorem{example}{Example}
\begin{document}

\title{\textit{What can Information Guess? }Guessing Advantage vs. Rényi Entropy for Small Leakages}

%%%%% Several authors with up to three affiliations:
 \author{
	   \IEEEauthorblockN{Julien Béguinot and Olivier Rioul
	   \IEEEauthorblockA{LTCI, Télécom Paris, Institut Polytechnique de Paris, France
	   					\textit{firstname.lastname}@telecom-paris.fr}}\\[-3.65ex]
	 }
\maketitle

%%%%%
\begin{abstract} 
%\textnormal{``THIS PAPER IS ELIGIBLE FOR THE STUDENT PAPER AWARD."}
%
%
We leverage the Gibbs inequality and its natural generalization to Rényi entropies to derive closed-form parametric expressions of the optimal lower bounds of $\rho$th-order guessing entropy (guessing moment)
of a secret taking values on a finite set,  in terms of the Rényi-Arimoto $\alpha$-entropy.
This is carried out in an non-asymptotic regime when side information may be available.
The resulting bounds yield a theoretical solution to a fundamental problem in side-channel analysis: Ensure that an adversary will not gain much guessing advantage when the leakage information is sufficiently weakened by proper countermeasures in a given cryptographic implementation.
Practical evaluation for classical leakage models show that the proposed bounds greatly improve previous ones for analyzing  the capability of an adversary to perform side-channel attacks.
\end{abstract}

\maketitle

\section{Introduction}

\emph{Guessing entropy}~\cite{massey1994guessing}, also known as \emph{guesswork}~\cite{pliam2000gueswork}, is perhaps the most popular security metric 
in the context of side-channel attacks of embedded cryptographic devices, such as the cryptographic microcontrollers used in banking smartcards~\cite{standaert+2009unified,ChoudaryPopescu17,TanasescuChoudaryRioulPopescu21}.
Such attacks exploit \emph{leakage} information to recover the secret key, byte by byte in a divide-and-conquer strategy, in which each subkey byte $K\in\{1,\ldots,M\}$ 
(typically $M=128$ or $256$) 
is targeted independently of the others. The secret key $K$ is generally  assumed uniformly distributed, but in a more general framework, any type of secret $X\in\{1,\ldots,M\}$ (e.g., passwords, sensitive personal information, etc.) can be targeted from some disclosed side information $Y$.

Guessing entropy~\cite{massey1994guessing} $G(X|Y)$, or more generally, $\rho$th order guessing moments~\cite{arikan1996} $G_\rho(X|Y)$, relate to the number of tries that the attacker has to make to find the actual secret~$X$ for a given leakage side information $Y$, thereby estimating the brute force effort to find $X$ by exhaustive search. The popularity of $G_\rho$ (particularly $G=G_1$) as a security criterion comes from the fact that it is particularly informative, as it is computed from whole key ranking distribution~\cite{poussier2016keyenum}.
The adversary's \emph{guessing advantage} is defined as 
\begin{equation}
\Delta G_\rho (X;Y) \triangleq G_\rho(M)-G(X|Y) 
\end{equation}
where $G_\rho(M)=G_\rho(K)$ is for blind estimation (no leakage) of a uniformly distributed $M$-ary secret. %In particular  $G=G_1$ and $G(M)=G_1(M)=\frac{M+1}{2}$.

Side channel leakage was evaluated by information theoretic measures such as mutual information $I(X;Y)$~\cite{standaert+2009unified,eloi2019info,eloi2019ches,wei2022mask}, maximal leakage%
~\cite{issawagnerkamath2020operational,julien2023maxleak}, 
and more generally, Arimoto's $\alpha$-information~\cite{arimoto1977information} $I_\alpha(X;Y)=H_\alpha(X)-H_\alpha(X|Y)$ or Sibson's $\alpha$-information~\cite{Sibson69,Verdu15,sankar2019alpha,yi2021conditionalalpha}.
These two $\alpha$-informations coincide for uniform secrets: $I_\alpha(K;Y)=\log M -H_\alpha(K|Y)$.
In many practical cases, such as for protected implementations with masking or low SNR noise~\cite{ishai2003private,wei2022mask,Masure2023nearlytight,Masure2023removing,julien2023maxleak}, 
the information leakage is \emph{small}, which means that the conditional Rényi-Arimoto entropy~\cite{fehrberens2015arimoto} $H_\alpha(X|Y)$ approaches its maximum value  $\log M$ in the case of uniform secrets. Thus, to evaluate the impact of leakage in cryptographic implementations, a significant quantity  is the ``\emph{information advantage}''
\begin{equation}
 \Delta H_\alpha(X;Y) \triangleq \log M - H_\alpha(X|Y).
\end{equation}

A fundamental problem in side-channel analysis is to ensure that an adversary will not gain much guessing advantage when the leakage is sufficiently weakened by proper countermeasures in some cryptographic implementation~\cite{ChoudaryPopescu17,TanasescuChoudaryRioulPopescu21}.
In other words, it is important to upper bound $\Delta G_\rho (X;Y)$ in terms of $\Delta H_\alpha(X;Y)$, or equivalently, to lower bound $G_\rho (X|Y)$ in terms of $H_\alpha(X|Y)$.
As explained in~\cite{ChoudaryPopescu17,TanasescuChoudaryRioulPopescu21}, this is also useful to practically evaluate guessing advantage for a full key (e.g., 256-bit key) whose direct evaluation is not tractable.

Many such bounds have been derived in the literature. Massey's original inequality~\cite{massey1994guessing} is for $\alpha=\rho=1$ and was improved by Rioul~\cite{rioul2022variations} as an asymptotically optimal inequality~\cite{TanasescuChoudaryRioulPopescu21} as $M\to+\infty$.
Arikan's inequalities~\cite{arikan1996} are for $\alpha=\frac{1}{1+\rho}$ and exhibit asymptotic equivalence. More general bounds for various ranges of $\alpha$ and $\rho$ were established in~\cite{rioul2022variations}. These works, however, can only be optimal as $M\to+\infty$ and do not consider the nonasymptotic scenario for small leakage, i.e., around the corner $(H_\alpha(X|Y)\approx \log M,G_\rho(X|Y)\approx G_\rho(M))$ for relatively small $M$.

\begin{figure}[ht!]
    \centering
    \includegraphics[width=.4\textwidth,height=.25\textwidth]{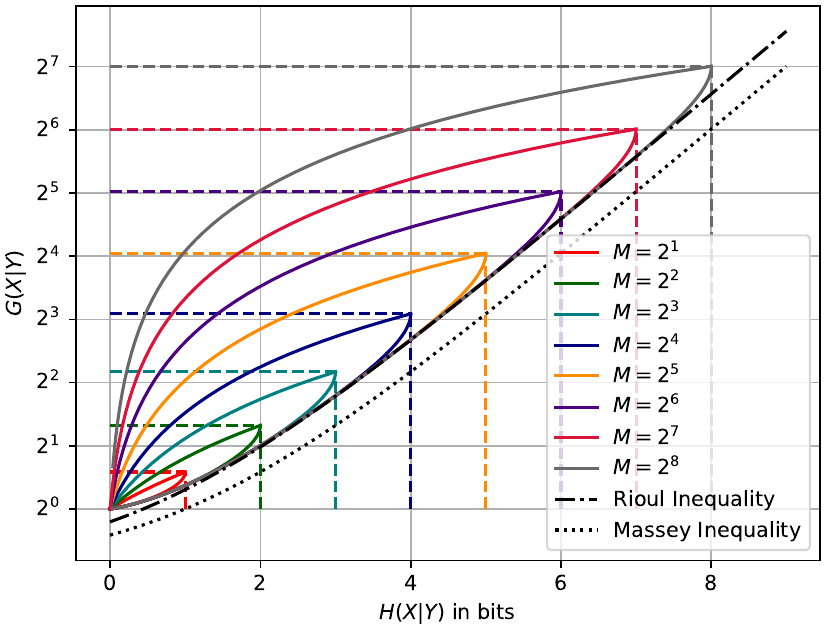}
    \caption{\small Optimal joint range region between $H(X|Y)$ and $G(X|Y)$ for different values of $M$. 
The (optimal) \emph{upper} bound is that of Mc Eliece and Yu~\cite{mceliece1995inequality}. The optimal lower bound is derived in this paper.
The black dotted and dash-dotted curves correspond to Massey's~\cite{massey1994guessing} and Rioul's~\cite{rioul2022variations} inequalities, that do not depend on $M$.  }
    \label{fig:motivation}
\end{figure}

As an illustration for $\alpha=\rho=1$, Figure~\ref{fig:motivation} shows that for any fixed value of $M$, there is a multiplicative gap of approximately $e/2$ in guessing entropy from this corner point $(\log M, \frac{M+1}{2})$ compared to Rioul's inequality.
Sason and Verd{\'u}~\cite{sason2018improved} improved Arikan's inequalities in a non-asymptotic regime
for any ranking function but did not obtain the exact locus of attainable values of $G_\rho$ vs. $H_\alpha$, nor closed-form expressions for lower bounds of $G_\rho$ vs. $H_\alpha$.

In this paper, we leverage the Gibbs inequality~\cite{CoverThomas} and its natural generalization to Rényi entropies~\cite{rioul2020renyiEPI} to derive closed-form parametric expressions of the optimal lower bounds of $G_\rho(X|Y)$ vs. $H_\alpha(X|Y)$ (or upper bounds of $\Delta G_\rho(X ;Y)$ vs. $\Delta H_\alpha(X ;Y)$).
We obtain an explicit, easily computable, first-order bound of the form $\Delta G_\rho(X;Y)\leq c \sqrt{\Delta H_\alpha(X;Y)}$ for small leakages. We then evaluate and refine our bounds in the 
Hamming weight and random probing leakage models.

%%%%%%%%%%%%%%%%%%%%%%%%%%%%%%%%%%

\section{Notations}

The secret is modeled as a random variable $X$  with pmf $p_X$ taking values in a finite set $\{1,\ldots,M\}$. The side channel $X\to Y$ leaks some side information $Y$, which is modeled as an arbitrary random variable (discrete or continuous).
For most cryptographic applications, $X$ is a uniformly distributed key $K \sim \mathcal{U}(M)$.
In this case the side channel is noted $K \to Y$.

We let $\alpha > 0$ and $\rho > 0$ be the entropy and guessing entropy orders, respectively. 
The Rényi entropy of order $\alpha$ is $H_\alpha(X) \triangleq - \alpha' \log \| p_X \|_\alpha$ where
$ \| p \|_\alpha \triangleq  ( \sum p^\alpha  )^{{1}/{\alpha}}$ denotes the ``$\alpha$-norm'', the logarithm is taken to an arbitrary base, and $\alpha'$ is the Hölder conjugate of $\alpha$: $\frac{1}{\alpha} + \frac{1}{\alpha'} = 1$, i.e. $\alpha' = \frac{\alpha}{\alpha-1}$ or $(\alpha'-1)(\alpha-1) = 1$. We also write $H_\alpha(p)=- \alpha' \log \| p \|_\alpha$. The limiting case $\alpha\to 1$ is the Shannon entropy $H(X)=H_1(X)$. 
The limiting case $\alpha\to\infty$ is the min-entropy $H_\infty(X) \triangleq  - \log \max p_X$.

The $\rho$-guessing entropy  (a.k.a. guessing $\rho$th-order moment) is 
$G_\rho(X) \triangleq \min_{ \sigma} \sum_{i = 1}^{\card} p_X( \sigma(i) ) i^\rho =  \min_{ \sigma} \E(\sigma(X)^\rho)$, where the minimum is over all possible permutations of the secret values.
We also write $G_\rho(p)$ when $X$ has pmf $p$. By the rearrangement inequality,
since the sequence $i^\rho$ is increasing, the optimal guessing strategy is such that the $p_X( \sigma(i) )$ are arranged in decreasing order. 
 When $X=K$ is uniformly distributed, its $\rho$-guessing entropy is also noted
$G_\rho(M) \triangleq \frac{1}{M} \sum_{i = 1}^{M} i^\rho$.

In the presence of leakage $Y$, the most natural definition of conditional $\alpha$-entropy is that of Arimoto~\cite{arimoto1977information,fehrberens2015arimoto}: 
$H_\alpha(X|Y) \triangleq - \alpha'  \log \E_Y  \| p_{X|Y} \|_\alpha$.
In particular 
 $H_\infty(X|Y) \triangleq  - \log \E_Y \max p_{X|Y}$ where $\E_Y \max p_{X|Y}$ is the optimal success probability as given by the MAP rule.
Following Hirche~\cite{hirche20renyi} we also define
\begin{align}\label{eqKalpha}
    K_\alpha(X) &\triangleq \|p_X\|_\alpha =\exp \bigl ( \tfrac{1-\alpha}{\alpha} H_\alpha(X) \bigr )  \\
    K_\alpha(X|Y) &\triangleq\E_Y  \|p_{X|Y}\|_\alpha = \exp \bigl ( \tfrac{1-\alpha}{\alpha} H_\alpha(X|Y) \bigr )  
\end{align}
so that $K_\alpha(X|Y) = \E_y K_\alpha(X|Y=y)$.
Likewise, the conditional $\rho$-guessing entropy is 
$G_\rho(X|Y) \triangleq \E_{y }  G_\rho(X|Y=y) $.

%%%%%%%%%%%%%%%%%%%%%%%%%%%%%%%%%%%%%%%%%%%%%%%%%%
%%%%%%%%%%%%%%%%%%%%%%%%%%%%%%%%%%%%%%%%%%%%%%%%%%

\section{Optimal Lower Bound on $G_\rho(X|Y)$ vs. $H_\alpha(X|Y)$}

\subsection{$G(X|Y)$ vs. $H(X|Y)$}

\begin{theorem}\label{th:GH}
The optimal lower bound on $G(X|Y)$ vs. $H(X|Y)$ is given by the parametric curve for $\gamma\in(0,1)$:
\pagebreak
\begin{equation}\label{eq:GH}
    \begin{cases}
        G(X|Y) = \frac{1}{1-\gamma} - \frac{M \gamma^M}{1-\gamma^M} 
        \\
        H(X|Y) = \log (\! \gamma \frac{1 - \gamma^{\card}}{1 -\gamma} ) \!-\! (\log \gamma) ( \frac{1}{1-\gamma} \!-\! \frac{M \gamma^M}{1-\gamma^M} )
    \end{cases} 
\end{equation}
where the limiting case $\gamma\to 1$ gives $G=\frac{M+1}{2}$ and $H=\log M$ attained for the uniform distribution. 

The optimal upper bound on $\Delta G(X ; Y) = \frac{M+1}{2}-G(X|Y)$ vs. $\Delta H(X ; Y) = \log M -H(X|Y)$ is given by the parametric curve for $\mu\in(0,+\infty)$:
\begin{equation}\label{eq:DeltaGDeltaH}
\!\!\begin{cases}
    \Delta G(X ; Y) =  \frac{1}{2} \bigl( \card \coth( \card \mu)  -  \coth( \mu ) \bigr)
\\
    \Delta H(X;Y) = \log  \frac{ \card \sinh \mu}{\sinh(\card \mu)}  
 +2\mu (\log e) \Delta G(X;Y).
\end{cases}
\end{equation}
\end{theorem}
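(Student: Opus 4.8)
The plan is to reduce the conditional statement to an unconditional extremal problem and then solve the latter by a Gibbs-inequality tilting argument. Since for $\alpha=\rho=1$ both sides are ordinary averages over the leakage, $H(X|Y)=\E_Y H(X|Y{=}y)$ and $G(X|Y)=\E_Y G(X|Y{=}y)$, the set of achievable pairs $(H(X|Y),G(X|Y))$ is exactly the convex hull of the unconditional region $\mathcal R=\{(H(p),G(p)):p\text{ a pmf on }\{1,\dots,M\}\}$ (a convex combination with weights $\lambda_k$ is realized by the channel $Y{=}k$ with probability $\lambda_k$, $p_{X|Y=k}=p^{(k)}$). Hence it suffices to determine the lower boundary of $\mathcal R$ and to check that it is already convex: its convex hull is then itself, and no conditioning can push below it.

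To minimize $G(p)$ subject to $H(p)=h$, I would use the rearrangement inequality to assume $p_1\ge\cdots\ge p_M$, so $G(p)=\sum_i i\,p_i$, and for each $\gamma\in(0,1)$ apply the Gibbs inequality (nonnegativity of the Kullback–Leibler divergence) with the geometric reference $q^{(\gamma)}_i=\gamma^i/Z_\gamma$, $Z_\gamma=\gamma\frac{1-\gamma^M}{1-\gamma}$. The point of this tilt is that $-\sum_i p_i\log q_i^{(\gamma)}=\log Z_\gamma-(\log\gamma)G(p)$ is affine in the guessing moment, so Gibbs yields a one-parameter family of affine lower bounds $G(p)\ge\ell_\gamma(h):=(\log Z_\gamma-h)/\log\gamma$, tight iff $p=q^{(\gamma)}$. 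Since $q^{(\gamma)}$ is itself a decreasing pmf, feeding it back into the whole family shows $G(q^{(\gamma)})=\sup_{\gamma'}\ell_{\gamma'}(H(q^{(\gamma)}))=\min\{G(p):H(p)=H(q^{(\gamma)})\}$: the geometric laws trace the envelope $\sup_\gamma\ell_\gamma$, which is therefore the exact lower boundary of $\mathcal R$, and being a supremum of affine functions of $h$ it is convex, which is precisely what the reduction above needs.

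It then remains to compute and reparametrize. For $p=q^{(\gamma)}$, standard geometric sums give $G=\frac{1}{1-\gamma}-\frac{M\gamma^M}{1-\gamma^M}$ and $H=\log Z_\gamma-(\log\gamma)G$, which is \eqref{eq:GH}; as $\gamma\to1$ the law tends to uniform and $(H,G)\to(\log M,\frac{M+1}{2})$, and since $H(q^{(\gamma)})$ varies continuously over $(0,\log M)$ the whole boundary is covered. Setting $\gamma=e^{-2\mu}$ with $\mu\in(0,+\infty)$ (a decreasing bijection onto $(0,1)$), the elementary identities $\frac{1}{1-e^{-2\mu}}-\frac12=\frac12\coth\mu$, $\frac M2+\frac{Me^{-2M\mu}}{1-e^{-2M\mu}}=\frac M2\coth(M\mu)$, $\gamma\frac{1-\gamma^M}{1-\gamma}=e^{-(M+1)\mu}\frac{\sinh(M\mu)}{\sinh\mu}$ and $\log\gamma=-2\mu\log e$ turn \eqref{eq:GH} into \eqref{eq:DeltaGDeltaH} after substituting $\Delta G=\frac{M+1}{2}-G$ and $\Delta H=\log M-H$.

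The crux is the tilting step: spotting that the exponential reference $q_i\propto\gamma^i$ is what linearizes the entropy constraint against $G$, and then arguing that the resulting affine Gibbs bounds are not merely valid but \emph{collectively tight}, their envelope being swept out exactly by the truncated geometric laws — which also hands us the convexity needed to make the conditioning reduction lossless. By contrast, the geometric-sum evaluation of the closed form and the hyperbolic reparametrization into the $\Delta$-form are routine bookkeeping.
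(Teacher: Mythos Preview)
Your proposal is correct and follows essentially the same approach as the paper: both use the Gibbs inequality with the truncated geometric tilt $q_i\propto\gamma^i$ to obtain a one-parameter family of affine lower bounds on $G$ in terms of $H$, identify the geometric laws as the equality cases that trace the boundary, and then reparametrize via $\gamma=e^{-2\mu}$ to reach the hyperbolic form. Your convex-hull framing of the conditional-to-unconditional reduction is slightly more explicit than the paper's (which simply averages the affine Gibbs inequality over $Y$ and notes the equality case), but the substance is identical.
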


\begin{proof}
Here $\alpha=\rho=1$ is the classical situation studied by Massey~\cite{massey1994guessing}, where
the essential ingredient is the \emph{Gibbs inequality}~\cite{CoverThomas}:
 In the unconditional case, for any pmf $q$, 
 \begin{equation} 
 H( X ) \leq - \E_X \log q(X)
 \end{equation} 
 with equality iff $q=p_X$.
One may always assume that $p_X(x)$ is nonincreasing in $x$, in which case $G(X)=\E(X)$.
Therefore, we choose $q$ such that $\log q(x) = a + b x$ for real constants $a,b$, so that $\E_X \log q(X) = a + b G(X)$. To allow equality in the Gibbs inequality ($q=p_X$), it is necessary that $q(x)$ be nonincreasing in $x$, i.e., $b\leq 0$. Now 
$q(x)$ rewrites as a truncated geometric pmf $q(x)= c_\gamma \gamma^x$ where $\gamma = \exp(b) \in (0,1]$ and $c_\gamma = ( \sum_{i=1}^M \gamma^i)^{-1} > 0$
is a normalization factor. % so that $q$ is a pmf.
Thus one obtains 
\begin{equation}
    H(X) \leq -\log c_\gamma - (\log \gamma) G(X).
\end{equation}
for all $\gamma \in (0,1)$, with equality iff $p_X=q$ where $H(X)=H(q)$ and $G(X)=G(q)$.

In the conditional case, we similarly have
$H(X|Y=y) \leq -\log c_\gamma - (\log \gamma) G(X|Y=y)$
 for every $y$. Taking the expectation over $Y$ yields the same inequality for conditioned entropies:
\begin{equation}
H(X|Y) \leq -\log c_\gamma - (\log \gamma) G(X|Y),
\end{equation}e.
that is, $G(X|Y) \geq - (\log \gamma)^{-1} ( H(X|Y) + \log c_\gamma)$.
The equality case still corresponds to $H(X|Y)=H(q)$ and $G(X|Y)=G(q)$.
Since $q$ approaches the uniform distribution as $\gamma \to 1$ and the Dirac distribution as $\gamma \to 0^+$, all possible values of entropies are attainable.

We thus obtain the following parameterization of the optimal lower bound on $G(X|Y)$ vs. $H(X|Y)$:% between $H(X|Y)$ and $G(X|Y)$ in the form of the parametric equation:
\begin{equation}
\begin{cases}
 G(X|Y) = G(q)= c_\gamma \sum_{i=1}^M  i \gamma^i 
 \\
 H(X|Y) = H(q)= - \log c_\gamma - G(q) \log \gamma
\end{cases}
\end{equation}
where $\gamma \in (0,1]$. The case $\gamma=1$ gives $H=\log M$ and $G=\frac{M+1}{2}$ attained for the uniform distribution. For $0<\gamma<1$, a straightforward calculation gives~\eqref{eq:GH}.
Setting $\mu \triangleq - \frac{1}{2} \ln \gamma \in (0,\infty)$, i.e., $\gamma=e^{-2\mu}$ gives~\eqref{eq:DeltaGDeltaH} for $\Delta G(X ; Y) = \frac{M+1}{2}-G(X|Y)$ and $\Delta H(X ; Y) = \log M -H(X|Y)$.
\end{proof}

\subsection{$G_\rho(X|Y)$ vs. $H(X|Y)$}

\begin{theorem}\label{th:GrhoH}
    The optimal lower bound of $G_\rho(X|Y)$ vs. $H(X|Y)$ is given by the parametric curve for $\gamma \in (0,1]$:
\pagebreak
    \begin{equation}\label{eq:GrhoH}
        \begin{cases}
            G_\rho(X|Y) = (\sum_{i=1}^M  i^{\rho} \gamma^{i^\rho}) (\sum_{i=1}^M \gamma^{i^\rho})^{-1}  
            \\
            H(X|Y) = \log ( \sum_{i=1}^M \gamma^{i^\rho} ) - (\log \gamma) \frac{\sum_{i=1}^M i^{\rho} \gamma^{i^\rho}}{ \sum_{i=1}^M \gamma^{i^\rho} }
        \end{cases}
    \end{equation}
\end{theorem}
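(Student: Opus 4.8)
The plan is to transcribe the proof of Theorem~\ref{th:GH} with the linear exponent replaced by a power-$\rho$ exponent. Assume without loss of generality that $p_X$ is nonincreasing, so that $G_\rho(X) = \E(X^\rho) = \sum_{i=1}^{M} i^\rho\, p_X(i)$. Apply the Gibbs inequality $H(X) \leq -\E_X \log q(X)$ to a comparison pmf $q$ chosen so that $\log q(x) = a + b\, x^\rho$ for real constants $a,b$; then $\E_X \log q(X) = a + b\, G_\rho(X)$. Since equality in Gibbs forces $q = p_X$ and hence $q$ nonincreasing, we need $b \leq 0$; writing $\gamma = e^b \in (0,1]$ and absorbing normalization, $q(x) = c_\gamma\, \gamma^{x^\rho}$ with $c_\gamma = \bigl(\sum_{i=1}^{M} \gamma^{i^\rho}\bigr)^{-1}$, which is nonincreasing in $x$ because $x^\rho$ is increasing and $\gamma \leq 1$. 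This gives $H(X) \leq -\log c_\gamma - (\log\gamma)\, G_\rho(X)$, with equality iff $p_X = q$.

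Passing to the conditional setting is identical to the proof of Theorem~\ref{th:GH}: for each $y$, apply the bound to the rearranged pmf $p_{X|Y=y}$ (legitimate since $H$ and $G_\rho$ are permutation-invariant) and take the expectation over $Y$, obtaining
\begin{equation*}
H(X|Y) \leq -\log c_\gamma - (\log\gamma)\, G_\rho(X|Y),
\end{equation*}
equivalently $G_\rho(X|Y) \geq -(\log\gamma)^{-1}\bigl(H(X|Y) + \log c_\gamma\bigr)$, with equality iff $p_{X|Y=y}$ is a permutation of $q$ for almost every $y$. In that case $H(X|Y) = H(q)$ and $G_\rho(X|Y) = G_\rho(q) = c_\gamma \sum_{i=1}^{M} i^\rho \gamma^{i^\rho}$; substituting $c_\gamma$ yields exactly~\eqref{eq:GrhoH}, and $\gamma=1$ gives the uniform corner $(\log M, G_\rho(M))$.

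It remains to argue that this parametric curve is the \emph{optimal} (pointwise tightest) lower bound. Each point of the curve is attained, e.g.\ by an $X$ independent of $Y$ with pmf $q_\gamma$. Conversely, writing $t = \log\gamma < 0$, the family $q_t(x) \propto e^{t\, x^\rho}$ is a one-parameter exponential family in the statistic $x^\rho$, so $G_\rho(q_t) = \E_{q_t}(X^\rho)$ is strictly increasing in $t$ (its $t$-derivative equals the variance of $X^\rho$ under $q_t$, positive for $M \geq 2$), and a short computation gives $\tfrac{d}{dt} H(q_t) = -t\, \V_{q_t}(X^\rho) > 0$. Hence $\gamma \mapsto H(q_\gamma)$ is a continuous increasing bijection from $(0,1]$ onto $(0,\log M]$, with the Dirac limit as $\gamma\to 0^+$ and the uniform limit as $\gamma\to 1$. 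Thus for any admissible value $H_0$ of $H(X|Y)$ there is a unique $\gamma^\ast$ with $H(q_{\gamma^\ast}) = H_0$; the Gibbs bound at $\gamma^\ast$ then reads $G_\rho(X|Y) \geq G_\rho(q_{\gamma^\ast})$, met with equality by the construction above, so the minimum of $G_\rho(X|Y)$ over all side channels with $H(X|Y) = H_0$ is precisely the curve value.

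The main obstacle is this last optimality/envelope step: verifying that the family of lines $H = -\log c_\gamma - (\log\gamma)\, G_\rho$ has~\eqref{eq:GrhoH} as its upper envelope, which is exactly what the monotonicity of $\gamma \mapsto H(q_\gamma)$ (equivalently, the sign of $\tfrac{d}{dt}H(q_t)$) secures. The inequality itself is a routine one-line use of Gibbs, as in the $\rho = 1$ case, so no new ideas beyond Theorem~\ref{th:GH} are needed there.
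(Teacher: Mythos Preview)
Your proposal is correct and follows essentially the same route as the paper: apply the Gibbs inequality with the tilted pmf $q(x)=c_\gamma\,\gamma^{x^\rho}$, average over $Y$, and read off the parametric curve from the equality case. Your exponential-family computation showing $\tfrac{d}{dt}H(q_t)=-t\,\V_{q_t}(X^\rho)>0$ makes the optimality/envelope step more explicit than the paper, which simply notes that $q$ ranges from Dirac to uniform so that ``all possible values of entropies are attainable''; otherwise the arguments coincide.
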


\begin{proof}
The proof is analog to the case of the preceding subsection, where  $\log q(x) = a + b x^\rho$, $q(x)=c_\gamma \gamma^{x^\rho}$, $\gamma = \exp b \in (0,1]$ and $c_\gamma =  ( \sum_{i=1}^M \gamma^{i^\rho} )^{-1} > 0$. Again $q$ approaches the uniform distribution as $\gamma \to 1$ and the Dirac distribution as $\gamma \to 0^+$, hence all possible values of entropies are attainable. One readily obtains
\begin{equation}
\begin{cases}
 G_\rho(X|Y) = G_\rho(q)= c_\gamma \sum_{i=1}^M  i^{\rho} \gamma^{i^\rho} 
 \\
 H(X|Y) = H(q)= - \log c_\gamma -  G_\rho(q) \log \gamma
\end{cases}
\end{equation}
which gives~\eqref{eq:GrhoH}.
\end{proof}

\subsection{$G_\rho(X|Y)$ vs. $H_\alpha(X|Y)$}

The most general optimal lower bound is given by the following Theorem.
It generalizes Theorems~\ref{th:GH} and~\ref{th:GrhoH}, which can be recovered in the limiting case $\alpha\to 1$.
\begin{theorem}\label{th:GrhoHalpha}
When $0<\alpha<1$, the optimal lower bound of $G_\rho(X|Y)$ vs. $H_\alpha(X|Y)$ is given by the parametric curve for $\gamma \in (0,\infty)$:
\begin{equation}\label{eq:GrhoHalpha<1}
    \begin{cases}
        G_\rho(X|Y) = 1+ \gamma^{-1} \bigr ( \frac{ \sum_{i=1}^M  ( 1-\gamma + \gamma i^\rho  )^{\alpha'}  }{  \sum_{i=1}^M  ( 1-\gamma + \gamma i^\rho  )^{\alpha'-1} } - 1 \bigr ) 
        \\
        H_\alpha(X|Y) = \alpha' 
        \log \sum_{i=1}^M  ( 1 -\gamma + \gamma i^\rho  )^{\alpha'-1}
        \\
        \hphantom{H_\alpha(X|Y) } 
        + (1-\alpha') \log \sum_{i=1}^M  ( 1 -\gamma + \gamma i^\rho  )^{\alpha'}.
    \end{cases}
\end{equation}
When $\alpha > 1$, the optimal lower bound of $G_\rho(X|Y)$ in terms of $H_\alpha(X|Y)$ is given by the parametric curve for $\gamma \in (0,1)$:
\begin{equation}\label{eq:GrhoHalpha>1}
    \begin{cases}
        G_\rho(X|Y) = \gamma^{-1} \bigl ( 1- \frac{ \sum_{i=1}^M (1-\gamma i^\rho)_+^{\alpha'}  }{  \sum_{i=1}^M (1-\gamma i^\rho)_+^{\alpha'-1} }  \bigr )
        \\
        H_\alpha(X|Y) = \alpha' \log \sum_{i=1}^M (1-\gamma i^\rho)_+^{\alpha'-1}
        \\
        \hphantom{H_\alpha(X|Y) } 
         + (1-\alpha') \log \sum_{i=1}^M (1-\gamma i^\rho)_+^{\alpha'}
    \end{cases}
\end{equation}
where $x_+=\max(x,0)$ denotes the positive part of $x$.
\end{theorem}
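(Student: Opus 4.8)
The plan is to follow the proofs of Theorems~\ref{th:GH} and~\ref{th:GrhoH} almost verbatim, only replacing the ordinary Gibbs inequality by its Rényi-entropy generalization~\cite{rioul2020renyiEPI}. Three reductions come for free. First, since $H_\alpha$ is symmetric and $G_\rho$ is by definition a minimum over permutations, both are invariant under relabeling, so we may assume $p_X$ sorted in decreasing order, in which case $G_\rho(p)=\sum_{i=1}^M p_i\,i^\rho$. Second, $K_\alpha(X|Y)=\E_Y\|p_{X|Y}\|_\alpha$ and $G_\rho(X|Y)=\E_Y G_\rho(X|Y=y)$ are \emph{linear} in the per-$y$ quantities, so any bound relating $\|p\|_\alpha$ and $G_\rho(p)$ affinely and holding for every pmf $p$ survives averaging over $Y$; conversely, each point of the resulting curve is attained by taking $X\indep Y$ with $X$ distributed as the extremal pmf found below, which proves optimality. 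Third, $K_\alpha(X)=\|p_X\|_\alpha=\exp(\tfrac{1-\alpha}{\alpha}H_\alpha(X))$ is monotone in $H_\alpha(X)$ (increasing for $\alpha<1$, decreasing for $\alpha>1$), so a bound on $G_\rho$ in terms of $K_\alpha$ is equivalent to one in terms of $H_\alpha$.

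The engine is the following Rényi--Gibbs inequality, a direct consequence of Hölder's inequality (see~\cite{rioul2020renyiEPI}): for pmfs $p,q$, one has $\|p\|_\alpha\leq\sum_i p_i\,q_i^{1/\alpha'}$ when $0<\alpha<1$ and $\|p\|_\alpha\geq\sum_i p_i\,q_i^{1/\alpha'}$ when $\alpha>1$, with equality in both cases iff $q\propto p^{\alpha}$; it degenerates to the usual Gibbs inequality $H(X)\leq-\E_X\log q(X)$ as $\alpha\to1$. As in Theorems~\ref{th:GH}--\ref{th:GrhoH}, the trick is to choose $q$ so that $q_i^{1/\alpha'}$ is \emph{affine in $i^\rho$}, which turns $\sum_i p_i\,q_i^{1/\alpha'}$ into an affine function of $G_\rho(p)$ and yields, by sweeping the free parameter, a one-parameter family of affine bounds on $G_\rho(p)$ whose envelope is the claimed optimal curve.

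For $0<\alpha<1$, set $q_i=c_\gamma(1-\gamma+\gamma\,i^\rho)^{\alpha'}$ for $\gamma\in(0,\infty)$, with $c_\gamma=(\sum_j(1-\gamma+\gamma\,j^\rho)^{\alpha'})^{-1}>0$; this is a valid pmf since $1-\gamma+\gamma\,i^\rho=1+\gamma(i^\rho-1)\geq1>0$. Then $\sum_i p_i\,q_i^{1/\alpha'}=c_\gamma^{1/\alpha'}(1-\gamma+\gamma\,G_\rho(p))$, so $K_\alpha(p)\leq c_\gamma^{1/\alpha'}(1-\gamma+\gamma\,G_\rho(p))$, i.e. $G_\rho(p)\geq\gamma^{-1}(c_\gamma^{-1/\alpha'}K_\alpha(p)-1)+1$, with equality iff $p_i\propto q_i^{1/\alpha}\propto(1-\gamma+\gamma\,i^\rho)^{\alpha'-1}$, which is decreasing in $i$ (as $\alpha'-1<0$) and hence automatically sorted. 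Averaging over $Y$ and substituting this extremal $p$ into $G_\rho(p)=\sum_i p_i\,i^\rho$ and $H_\alpha(p)=\tfrac{1}{1-\alpha}\log\sum_i p_i^\alpha$, then simplifying with $\gamma\,i^\rho=(1-\gamma+\gamma\,i^\rho)-(1-\gamma)$, $\alpha(\alpha'-1)=\alpha'$, $\tfrac{1}{1-\alpha}=1-\alpha'$ and $\tfrac{\alpha}{1-\alpha}=-\alpha'$, gives exactly~\eqref{eq:GrhoHalpha<1}. The extremal $p$ tends to the uniform pmf as $\gamma\to0$ (so $H_\alpha\to\log M$) and to the Dirac at $1$ as $\gamma\to\infty$ (so $H_\alpha\to0$), so the curve covers the whole range of $H_\alpha$; being the pointwise supremum over $\gamma$ of valid lower bounds and attained at every point, it is the optimal one.

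For $\alpha>1$ the inequality reverses, and the right choice is $q_i=c_\gamma(1-\gamma\,i^\rho)_+^{\alpha'}$ for $\gamma\in(0,1)$ (so that $i=1$ lies in the support), with $c_\gamma=(\sum_j(1-\gamma\,j^\rho)_+^{\alpha'})^{-1}$. Since $(1-\gamma\,i^\rho)_+\geq1-\gamma\,i^\rho$, one gets $K_\alpha(p)=\|p\|_\alpha\geq\sum_i p_i\,q_i^{1/\alpha'}=c_\gamma^{1/\alpha'}\sum_i p_i(1-\gamma\,i^\rho)_+\geq c_\gamma^{1/\alpha'}(1-\gamma\,G_\rho(p))$, hence $G_\rho(p)\geq\gamma^{-1}(1-c_\gamma^{-1/\alpha'}K_\alpha(p))$; equality requires simultaneously $q\propto p^\alpha$ and $p$ supported where $1-\gamma\,i^\rho\geq0$, i.e. $p_i\propto(1-\gamma\,i^\rho)_+^{\alpha'-1}$, whose support coincides with that of $q$ and which is decreasing in $i$. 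Averaging over $Y$ and plugging this $p$ into the definitions with the same identities yields~\eqref{eq:GrhoHalpha>1}; as $\gamma\to0$ one recovers the uniform pmf and increasing $\gamma$ shrinks the support to $\{1\}$, so all $H_\alpha\in[0,\log M]$ are covered. The main obstacle is precisely this $\alpha>1$ case: getting the direction of the Rényi--Gibbs inequality and its equality condition right, and, because of the positive-part truncation of $q$, verifying that the extremal $p$ has support \emph{exactly} equal to that of $q$ so that both inequalities above are tight at once. The remaining steps --- the algebraic reduction of the equality-case $(G_\rho,H_\alpha)$ to the stated closed forms, and the check that $\alpha\to1$ recovers Theorems~\ref{th:GH} and~\ref{th:GrhoH} after a suitable reparametrization of $\gamma$ --- are routine.
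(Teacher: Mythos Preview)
Your proof is correct and follows essentially the same approach as the paper: both use the Rényi--Gibbs inequality with a test pmf chosen so that the relevant power is affine in $i^\rho$, obtain a one-parameter family of affine bounds linking $K_\alpha$ and $G_\rho$, average over $Y$, and identify the envelope with the parametric curve by checking that the extremal pmf is nonincreasing and sweeps the full entropy range. The only cosmetic difference is that you write the Gibbs inequality as $\|p\|_\alpha \lessgtr \sum_i p_i\,q_i^{1/\alpha'}$ with equality at $q\propto p^\alpha$, whereas the paper writes it in the equivalent escort form $K_\alpha(X)\lessgtr \E_X q_\alpha^{1/\alpha'}(X)$ with equality at $q=p$; your $q$ is simply the $\alpha$-escort of the paper's $q$ (indeed $\alpha'/\alpha=\alpha'-1$), so the resulting extremal pmfs $p_i\propto(1-\gamma+\gamma i^\rho)^{\alpha'-1}$ and $p_i\propto(1-\gamma i^\rho)_+^{\alpha'-1}$ coincide in both treatments, as does the handling of the positive-part truncation when $\alpha>1$.
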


The proof relies on the following Gibbs inequality for Rényi entropies\cite[Prop.~8]{rioul2020renyiEPI}:
\begin{lemma}[Generalized Gibbs Inequality]
For any pmf $q$,
\begin{equation}\label{eqn:alpha-Gibbs}
H_\alpha(X) \leq - \alpha' \log \E_{X} q_\alpha^{{1}/{\alpha'}}(X) 
\end{equation}
with equality iff $p_X=q$. Here $q_\alpha$ is the \emph{escort distribution}~\cite{rioul2020renyiEPI} of $q$, defined by 
$q_\alpha(x)  =q^\alpha(x) /   \| q \|_\alpha^{\alpha}$.
\end{lemma}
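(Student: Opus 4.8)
The plan is to reduce the claim to a single application of Hölder's inequality with the conjugate pair $(\alpha,\alpha')$, after re-expressing the right-hand side through ordinary $\alpha$-norms. First I would expand the escort distribution: since $q_\alpha(x)=q^\alpha(x)/\| q \|_\alpha^\alpha$ and the Hölder relation gives $\alpha/\alpha'=\alpha-1$, one has $q_\alpha^{1/\alpha'}(x)=q^{\alpha-1}(x)/\| q \|_\alpha^{\alpha-1}$, whence
\[
\E_X q_\alpha^{1/\alpha'}(X)=\| q \|_\alpha^{-(\alpha-1)}\sum_x p_X(x)\,q^{\alpha-1}(x).
\]
Since $H_\alpha(X)=-\alpha'\log\| p_X \|_\alpha$, the target inequality \eqref{eqn:alpha-Gibbs} is equivalent, after dividing through by $-\alpha'$, to an inequality between the scalar $\sum_x p_X(x)q^{\alpha-1}(x)/\| q \|_\alpha^{\alpha-1}$ and $\| p_X \|_\alpha$, whose direction I track below.

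The core estimate is to bound $\sum_x p_X(x)\,q^{\alpha-1}(x)$ by Hölder's inequality applied to $a_x=p_X(x)$ and $b_x=q^{\alpha-1}(x)$ with exponents $\alpha$ and $\alpha'$. Using $(\alpha-1)\alpha'=\alpha$, the $b$-factor collapses to $\bigl(\sum_x q^\alpha(x)\bigr)^{1/\alpha'}=\| q \|_\alpha^{\alpha-1}$, so Hölder relates $\sum_x p_X(x)q^{\alpha-1}(x)$ to $\| p_X \|_\alpha\,\| q \|_\alpha^{\alpha-1}$; dividing by $\| q \|_\alpha^{\alpha-1}$ produces precisely the comparison with $\| p_X \|_\alpha$ needed above.

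The step I expect to be the main obstacle is keeping the inequality direction correct across the two regimes, where two sign reversals must cancel. For $\alpha>1$ one has $\alpha'>0$ and the \emph{forward} Hölder inequality yields $\le$; since $-\alpha'<0$ the map $t\mapsto-\alpha'\log t$ is decreasing, and the two facts combine to give $H_\alpha(X)\le-\alpha'\log\E_X q_\alpha^{1/\alpha'}(X)$. For $0<\alpha<1$ one has $\alpha'<0$, so the relevant tool is the \emph{reverse} Hölder inequality (valid for exponents below $1$), yielding $\ge$; but now $-\alpha'>0$ makes $t\mapsto-\alpha'\log t$ increasing, and this reversal cancels the previous one to give the same conclusion. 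The equality case is inherited from Hölder: equality holds iff $p_X^\alpha(x)\propto q^{(\alpha-1)\alpha'}(x)=q^\alpha(x)$, i.e.\ $p_X\propto q$, which forces $p_X=q$ as both are pmfs. Finally, letting $\alpha\to1$ recovers the classical Gibbs inequality $H(X)\le-\E_X\log q(X)$ used in Theorem~\ref{th:GH}.
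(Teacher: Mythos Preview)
Your argument is correct: expanding the escort power as $q_\alpha^{1/\alpha'}=q^{\alpha-1}/\|q\|_\alpha^{\alpha-1}$ and applying H\"older's inequality with the conjugate pair $(\alpha,\alpha')$---forward for $\alpha>1$, reverse for $0<\alpha<1$---yields exactly the comparison $\E_X q_\alpha^{1/\alpha'}(X)\lessgtr\|p_X\|_\alpha$, and the sign of $-\alpha'$ then aligns both cases to the stated inequality; the equality case and the limit $\alpha\to1$ are handled correctly. The paper does not give its own proof of this lemma but simply cites \cite[Prop.~8]{rioul2020renyiEPI} and remarks that the argument there carries over verbatim to the discrete setting, so there is nothing further to compare; your H\"older-based derivation is the standard route and is almost certainly what appears in the cited reference.
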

The proof given~\cite{rioul2020renyiEPI} was for random variables having pdfs with respect to the Lebesgue measure, but applies verbatim to discrete random variables having pmfs with respect to the counting measure on $\{1,2,\ldots,M\}$.
The Gibbs inequality~\eqref{eqn:alpha-Gibbs} can be easily rewritten directly in terms of $q(x)$ as
\begin{equation}\label{eq:GibbsHalpha}
H_\alpha(X) \leq (1-\alpha) H_\alpha(q) - \alpha' \log \E_X q^{\alpha-1}(X).
\end{equation}
In terms of~\eqref{eqKalpha} it also rewrites
\begin{equation}\label{eq:GibbsKalpha}
K_\alpha(X)  \lessgtr  \E_{X} q_\alpha^{{1}/{\alpha'}}(X)  =\frac{ \E_X q^{\alpha-1}(X)}{\|q\|_\alpha^{\alpha-1}}
\end{equation}
where $\lessgtr$ denotes $\geq$ for $0<\alpha<1$ and $\leq$ for $\alpha>1$.

\begin{proof}[Proof of Theorem~\ref{th:GrhoHalpha}]
First consider  the unconditional case. 
One may always assume that $p_X(x)$ is nonincreasing in $x$, in which case $G_\rho(X)=\E(X^\rho)$.
Therefore, we wish to choose $q$ such that $q^{\alpha-1}(x) = a + b x^\rho$ for real constants $a,b$, i.e., $q(x)=(a + b x^\rho)^{\alpha'-1}$, so that $\E_X q^{\alpha-1}(X)=a + b\,G_\rho(X)$.

When $0<\alpha<1$, $\alpha'<0$, to allow equality in the Gibbs inequality ($p_X=q$), it is necessary that $a + b x^\rho\geq 0$ with nonempty support and that $(a + b x^\rho)^{\alpha'-1}$ is nonincreasing for $x=1,2,\ldots,M$.
This gives the conditions $b\geq 0$ and $a>-b$. Rewriting $a + b x^\rho = (a+b) + b(x^\rho-1)$ we obtain
$q(x) =c_\gamma (1 + \gamma (x^\rho -1))^{\alpha'-1}$ where $\gamma \triangleq \frac{b}{a+b}\in[0,+\infty)$ and $c_\gamma =  ( \sum_{i=1}^M (1 + \gamma (i^\rho-1))^{\alpha'-1}  )^{-1} > 0$ is a normalization factor.
Note that $q$ is the uniform distribution when $\gamma =0$ and approaches the Dirac distribution as $\gamma \to +\infty$, hence all possible values of entropies are attainable.

When $\alpha>1$, $\alpha'>0$, we choose the positive part $q(x)=(a + b x^\rho)_+^{\alpha'-1}$. 
Again to allow equality in the Gibbs inequality ($p_X=q$), it is necessary $q(x)$ has nonempty support and is nonincreasing for $x=1,2,\ldots,M$. This gives the conditions $b\leq 0$ and $a>-b$. Factoring out $a>0$ we obtain $q(x) =c_\gamma (1 - \gamma x^\rho)_+^{\alpha'-1}$ where  $\gamma\triangleq \frac{-b}{a} \in (0,1)$ and $c_\gamma =  ( \sum_{i=1}^M (1 - \gamma x^\rho)_+^{\alpha'-1} )^{-1} > 0$ is a normalization factor.
Note that $q$ approaches the uniform distribution when $\gamma \to 0$ and the Dirac distribution when $\gamma \to 1$---in fact, it is the Dirac distribution for all $\gamma \in [2^{-\rho},1)$. Hence all possible values of entropies are again attainable.

In both cases,
the Gibbs inequality~\eqref{eq:GibbsKalpha} takes the form\footnote{%
For $\alpha>1$, the Gibbs inequality takes this form because when $p_X=q$, these pmfs have the same support, so that equality also holds in the inequality 
$\E_X  (a + b X^\rho)_+      \geq \E_X (a + b X^\rho ) \iff  \E_X q^{\alpha-1}(X) \geq a+bG_\rho(X)$.
}
$K_\alpha(X) \lessgtr \phi \bigl(G_\rho(X)\bigr)$
for some linear function~$\phi$,
with equality iff $p_X=q$, in which case $H_\alpha(X)=H_\alpha(q)$ and $G_\rho(X)=G_\rho(q)$.
In the conditional case, we similarly have
$K_\alpha(X|Y=y) \lessgtr \phi \bigl(G_\rho(X|Y=y)\bigr)$
 for every $y$, and taking the expectation over $Y$ yields the same inequality for conditioned entropies:
$K_\alpha(X|Y) \lessgtr \phi \bigl(G_\rho(X|Y)\bigr)$.
The equality case still corresponds to $H_\alpha(X|Y)=H_\alpha(q)$ and $G_\rho(X|Y)=G_\rho(q)$.

When $0<\alpha<1$,  $\E_X q^{\alpha-1}(X) = c_\gamma^{\alpha - 1} (1 + \gamma (G_\rho(X)-1))$\linebreak
and the equality case of the Gibbs inequality~\eqref{eq:GibbsHalpha} rewrites $H_\alpha(q)=-\frac{\alpha'}{\alpha} \log\E_X q^{\alpha-1}(X) = 
 -\frac{\alpha'}{\alpha} \log  \bigl(1 -\gamma+ \gamma G_\rho(q) \bigr )-\log c_\gamma$.
Since by definition $H_\alpha(q)=- \alpha' \log \| q \|_\alpha$ we obtain the parametric curve
\begin{equation}
    \begin{cases}
        G_\rho(X|Y) = \gamma^{-1} ( \|q \|_\alpha^\alpha c_\gamma^{1-\alpha} - 1) + 1
        \\
        H_\alpha(X|Y) = - \alpha' \log \| q \|_\alpha.
    \end{cases}
\end{equation}
Substituting $\| q \|_\alpha^\alpha = c_\gamma^\alpha \sum_{i=1}^M (1+\gamma (i^\rho -1))^{\alpha'}$ gives~\eqref{eq:GrhoHalpha<1}.

When $\alpha > 1$,
$\E_X q^{\alpha-1}(X) = c_\gamma^{\alpha - 1} (1 - \gamma G_\rho(X))$,
and the equality case of the Gibbs inequality~\eqref{eq:GibbsHalpha} rewrites $H_\alpha(q)=-\frac{\alpha'}{\alpha} \log\E_X q^{\alpha-1}(X) = -\frac{\alpha'}{\alpha} \log(1 - \gamma G_\rho(q))-\log c_\gamma$.
We similarly obtain the parametric curve
\begin{align}
    \begin{cases}
        G_\rho(X|Y) = \gamma^{-1} (1- \|q \|_\alpha^\alpha c_\gamma^{1-\alpha})
        \\
        H_\alpha(X|Y) = - \alpha' \log \| q \|_\alpha
    \end{cases}
\end{align}
Substituting $\| q \|_\alpha^\alpha = c_\gamma^\alpha \sum_{i=1}^M (1-\gamma i^\rho)_+^{\alpha'} $
gives~\eqref{eq:GrhoHalpha>1}.
\end{proof}

\begin{remark}
Sason and Verd\'u~\cite{sason2018improved} stated an implicit lower bound on guessing moment vs. Rényi entropy in the unconditional case, 
by specifying only the minimizing pmf (equation~(59) in~\cite{sason2018improved}) for a guessing strategy which is not necessarily optimal. 
Their minimizing pmf is not the same as in the above proof when $0<\alpha<1$ because it is does not satisfy the constraint that $p_X(x)$ should be decreasing in $x$. Additionally, it can be checked that it cannot approach the Dirac distribution, hence does not provide the full range of the entropy values.
\end{remark}

%%%%%%%%%%%%%%%%%%%%%%%%%%%%%%%%%%%%%%%%%%%%%%%%%%
%%%%%%%%%%%%%%%%%%%%%%%%%%%%%%%%%%%%%%%%%%%%%%%%%%

\medskip

As an important consequence, an explicit first-order upper bound on $\Delta G_\rho(X;Y)$ can be obtained, which is easy to compute for any adversary observing small leakages.
\begin{corollary}\label{thm:explicit}
As $\Delta H_\alpha(X;Y) \to 0$, up to first order,
\begin{equation}\label{eq:GrhoHalpha1storder}
    \Delta G_\rho(X;Y) \lesssim  
    \sqrt{ \frac{2 (G_{2\rho}(M)-G^2_\rho(M))}{\alpha} }
     \sqrt{ \frac{\Delta H_\alpha(X;Y)}{\log e} }.
\end{equation}
In particular, $ \Delta G(X;Y) \lesssim  
    \sqrt{\frac{M^2-1}{6 \alpha}}
     \sqrt{ \frac{\Delta H_\alpha(X;Y)}{\log e} }$.
\end{corollary}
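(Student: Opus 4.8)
The plan is to read off the leading-order (tangent) behaviour of the optimal boundary curve of Theorem~\ref{th:GrhoHalpha} at the corner point $(\log M, G_\rho(M))$, which corresponds to $\gamma\to 0$ in both \eqref{eq:GrhoHalpha<1} and \eqref{eq:GrhoHalpha>1}. Since the extremal pmf $q$ is the uniform distribution at $\gamma=0$, I would first write the small-$\gamma$ extremal as a perturbation of the uniform, $q(x)=\tfrac1M(1+\eps_x)+O(\gamma^2)$ with $\sum_x\eps_x=0$. From $q(x)=c_\gamma(1+\gamma(x^\rho-1))^{\alpha'-1}$ when $0<\alpha<1$ (resp. $q(x)=c_\gamma(1-\gamma x^\rho)_+^{\alpha'-1}$ when $\alpha>1$, whose positive part is inactive once $\gamma<M^{-\rho}$), a first-order expansion together with the normalizer $c_\gamma$ gives $\eps_x=\pm(\alpha'-1)\gamma(x^\rho-G_\rho(M))+O(\gamma^2)$.

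Next I would expand the two coordinates of the curve. Since $G_\rho(q)=\sum_x q(x)\,x^\rho$ is affine in $q$, I get $\Delta G_\rho(X;Y)=G_\rho(M)-G_\rho(q)=|\alpha'-1|\,\gamma\,V+O(\gamma^2)$, where $V\triangleq G_{2\rho}(M)-G_\rho^2(M)=\tfrac1M\sum_x(x^\rho-G_\rho(M))^2$ is the variance of $X^\rho$ under the uniform law. For the entropy, the key point is that the uniform distribution is a critical point of $q\mapsto H_\alpha(q)$ on the simplex: the gradient of $\sum_x q(x)^\alpha$ is constant over the simplex, so its first variation along any tangent direction $\sum_x\eps_x=0$ vanishes, and $\Delta H_\alpha(X;Y)$ is therefore second order in $\gamma$. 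A second-order Taylor expansion, $\|q\|_\alpha^\alpha=M^{1-\alpha}\bigl(1+\tfrac{\alpha(\alpha-1)}{2M}\sum_x\eps_x^2+O(\gamma^3)\bigr)$ and $H_\alpha(q)=-\alpha'\log\|q\|_\alpha$, then yields $\Delta H_\alpha(X;Y)/\log e=\tfrac{\alpha}{2M}\sum_x\eps_x^2+O(\gamma^3)=\tfrac{\alpha(\alpha'-1)^2}{2}\,\gamma^2\,V+O(\gamma^3)$, the factor $\log e$ absorbing the arbitrary logarithm base.

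Finally I would eliminate $\gamma$ between these two expansions: $\gamma=\Delta G_\rho(X;Y)/(|\alpha'-1|\,V)+O(\Delta G_\rho^2)$, hence $\Delta H_\alpha(X;Y)/\log e=\tfrac{\alpha}{2V}\Delta G_\rho^2(X;Y)+o(\Delta G_\rho^2)$, i.e. $\Delta G_\rho(X;Y)\lesssim\sqrt{2V/\alpha}\,\sqrt{\Delta H_\alpha(X;Y)/\log e}$ as $\Delta H_\alpha(X;Y)\to 0$. Note the $\alpha'$ dependence cancels, leaving only $\alpha$, and the same computation handles $0<\alpha<1$ and $\alpha>1$ (the case $\alpha=1$ following by continuity, consistently with Theorems~\ref{th:GH} and~\ref{th:GrhoH}). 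The special case $\rho=1$ then reduces to the uniform variance $V=G_2(M)-G_1^2(M)=\tfrac{(M+1)(2M+1)}{6}-\tfrac{(M+1)^2}{4}=\tfrac{M^2-1}{12}$.

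I expect the main obstacle to be the second-order bookkeeping for $H_\alpha$: one must carry the $O(\gamma^2)$ corrections to both $q$ and $c_\gamma$ and verify that they contribute to $\Delta H_\alpha$ only at third order, so that the vanishing of the first-order term is genuinely what forces the square-root law. Alternatively, one can bypass the simplex-perturbation picture and expand the explicit sums $\sum_i(1-\gamma+\gamma i^\rho)^{\alpha'}$ and $\sum_i(1-\gamma+\gamma i^\rho)^{\alpha'-1}$ appearing in \eqref{eq:GrhoHalpha<1} directly to second order in $\gamma$, then perform the same elimination; this is more mechanical but avoids any differential-geometric argument.
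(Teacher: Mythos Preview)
Your proposal is correct and follows essentially the same approach as the paper: a Taylor expansion of the parametric curve of Theorem~\ref{th:GrhoHalpha} about $\gamma=0$, yielding $\Delta G_\rho\sim|\alpha'-1|\gamma V$ at first order and $\Delta H_\alpha/\log e\sim\tfrac{\alpha(\alpha'-1)^2}{2}\gamma^2 V=\tfrac{|\alpha'(1-\alpha')|}{2}\gamma^2 V$ at second order, followed by elimination of $\gamma$. Your presentation is more detailed than the paper's (which simply states the two expansions), and your observation that the uniform law is a critical point of $H_\alpha$ on the simplex is exactly the conceptual reason the entropy expansion starts at second order, forcing the square-root law.
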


\begin{proof} 
One has $\Delta G_\rho(q)\to0$ and $\Delta H_\alpha(q) \to0$ when $q$ approaches the uniform distribution, i.e., when $\gamma\to 0$ in~\eqref{eq:GrhoHalpha<1} or~\eqref{eq:GrhoHalpha>1}.
Taylor expansion about $\gamma=0$ in both cases gives
\begin{equation}
    \begin{cases}
    \Delta G_\rho(X;Y) = \gamma |1 - \alpha'| (G_{2\rho}(M)-G^2_\rho(M)) + O(\gamma^2)
    \\
    \frac{\Delta H_\alpha(X;Y)}{\log e} = \tfrac{|\alpha'(1-\alpha')|}{2} (G_{2\rho}(M)\!-\!G_\rho^2(M)) \gamma^2 + O(\gamma^3)
    \end{cases}
\end{equation}
which yields~\eqref{eq:GrhoHalpha1storder}. This is also valid for $\alpha=1$ by taking the limit $\alpha\to 1$.
\end{proof}

Fig.~\ref{fig:illus-main-theorem} illustrates the results of this section. Next we give two examples for which explicit upper and lower bounds can be derived directly.

\begin{figure}[ht!]
    \centering
    \begin{subfigure}{0.4\textwidth}
        \captionsetup{skip=-0.3cm}%
        \includegraphics[width=\textwidth]{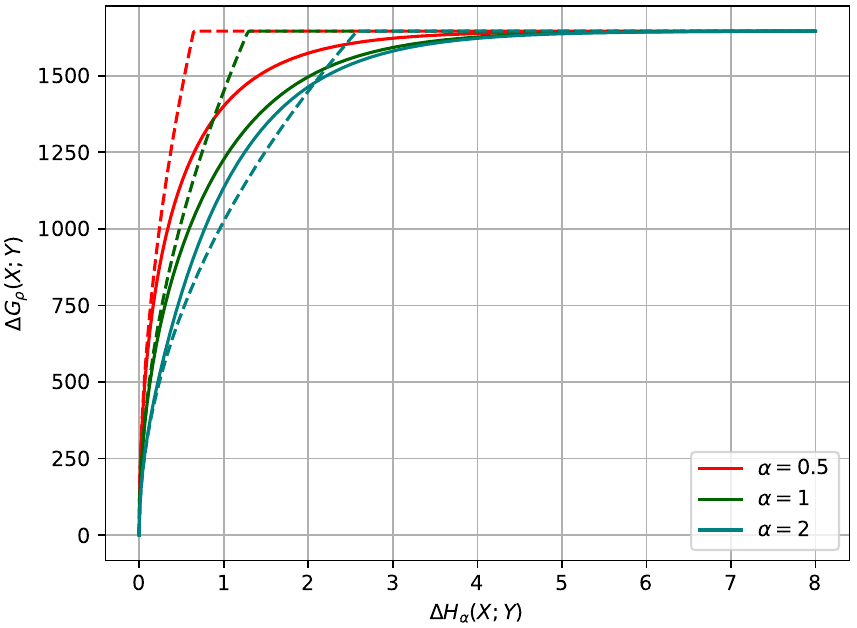}
        \label{fig:illustration_card=8}
        \caption{\small $\rho=1.5$.}
    \end{subfigure}
    \begin{subfigure}{0.39\textwidth}
        \captionsetup{skip=-0.3cm}
        \includegraphics[width=\textwidth]{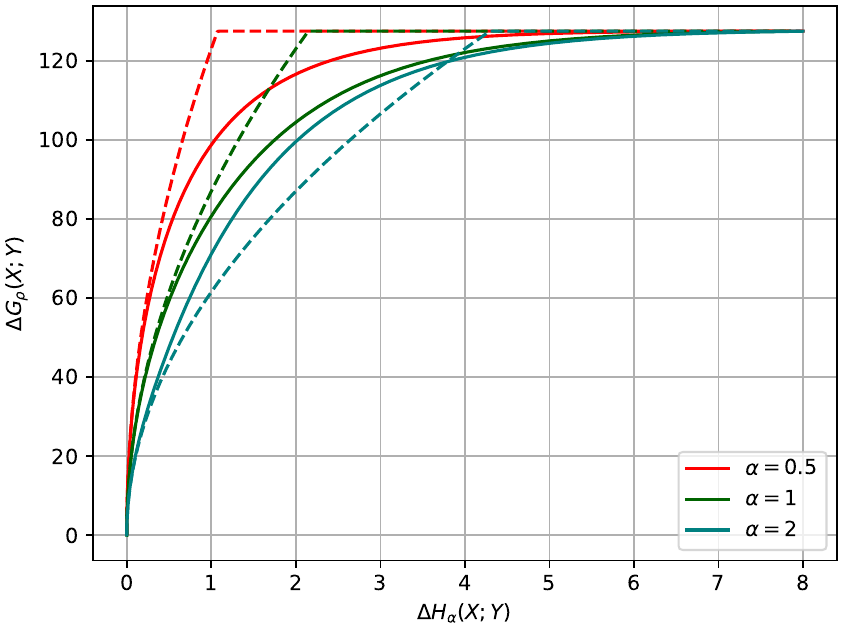}
        \label{fig:illustration_card=8}
        \caption{\small $\rho=1$.}
    \end{subfigure}

    \caption{\small \emph{Solid}: Illustration of Theorems~\ref{th:GH},\ref{th:GrhoH},\ref{th:GrhoHalpha} (upper bounds of $\Delta G_\rho$ vs. $\Delta H_\alpha$) for various values of $\alpha$ and $\rho$ when $M=2^8$. \emph{Dashed}: First-order lower bound $\Delta G_\rho\leq c \sqrt{\Delta H_\alpha}$  from Corollary~\ref{thm:explicit}.}
    \label{fig:illus-main-theorem}
\end{figure}

%%%%%%%%%%%%%%%%%%%%%%%%%%%%%%%%%%%%%%%%%%%%%%%%%%
%%%%%%%%%%%%%%%%%%%%%%%%%%%%%%%%%%%%%%%%%%%%%%%%%%

\begin{example}[Binary Random Variables]
Let $ \mathcal{B}(p)$ be the Bernoulli distribution with parameter $p \in [0,1]$ and define $h_\alpha(p)  \triangleq H_\alpha( \mathcal{B}(p))=\frac{1}{1-\alpha}\log((1-p)^\alpha+p^\alpha)$ and
$k_\alpha(p) \triangleq K_\alpha( \mathcal{B}(p))=((1-p)^\alpha+p^\alpha)^{1/\alpha}$.
We let $h_\alpha^{-1}$ and $k_\alpha^{-1}$ be their inverse when restricted to $[0,\frac{1}{2}]$.
Note that $k_\alpha$ is concave increasing if $\alpha \in (0,1)$; and convex decreasing if $\alpha > 1$.

For $M=2$ in the unconditional case (without side information) we simply have $G_\rho(X)=(1-p)+p2^\rho=1 + (2^\rho -1) p= 1 + (2^\rho -1) k_\alpha^{-1}( K_\alpha(X) )$. 
Since $k_\alpha^{-1}$ is convex, by Jensen's inequality, we obtain the desired lower bound explicitly.
The upper bound is similarly obtained by taking the chord of $k_\alpha^{-1}$:
\begin{equation}
    \begin{aligned}
        1 + (2^\rho -1)& h_\alpha^{-1}( H_\alpha(X|Y) ) 
        \leq G_\rho(X|Y) 
        \\
        &\leq 1 + \tfrac{2^\rho -1}{2} \frac{\exp{ \left ( \frac{1-\alpha}{\alpha} H_\alpha(X|Y) \right ) } -1 }{2^\frac{1-\alpha}{\alpha} - 1}.
    \end{aligned}
\end{equation}
Since $ h_\alpha^{-1}(\log 2 - \delta) \approx \frac{1}{2} - \sqrt{ \frac{1}{2\alpha} \frac{\delta}{\log e } }$ as $\delta \to 0$ we recover Corollary~\ref{thm:explicit} as $\delta=\Delta H_\alpha(X;Y) \to 0$,  which for $M=2$ takes the form
\begin{equation}\label{eqn:taylor2}
    \Delta G_\rho(X;Y) \lesssim 
   (2^\rho -1)     \sqrt{\frac{\Delta H_\alpha(X;Y)}{2\alpha\log e} }.
\end{equation}
In the other direction we also obtain at first order that 
\begin{equation}
\Delta G_\rho(X;Y)\gtrsim
     \tfrac{2^\rho - 1}{2} \tfrac{1-\alpha}{\alpha} \tfrac{2^{\frac{1-\alpha}{\alpha}}}{2^{\frac{1-\alpha}{\alpha}}-1}  \frac{\Delta H_\alpha(X;Y)}{\log e}  
\end{equation}
which reads
$\Delta G_\rho(X;Y)\gtrsim  \tfrac{2^\rho - 1}{2} \frac{\Delta H(X;Y)}{\log e} $
 in the limiting case $\alpha\to 1$.
\end{example}

\begin{example}[Guessing vs. Min-Entropy]
When $\alpha = \infty$ explicit bounds can also be derived.
Sason and Verdù~\cite[Theorem 9]{sason2018improved} provided the joint range between probability of error $\epsilon_{X|Y}$ and $\rho$th-order guessing moments.
Since $H_\infty(X|Y) =  - \log (1-\epsilon_{X|Y})$, their results can be easily rewritten in terms of min-entropy. This gives
\begin{equation}
\begin{aligned}
    K \sum_{i = 1}^{  \lfloor K^{-1} \rfloor } i^\rho &+ ( 1 - K \lfloor K^{-1} \rfloor  ) 
    ( 1 + \lfloor K^{-1} \rfloor )^\rho
    \\
    \leq G_\rho(X|Y) 
    &\leq 1 + \frac{\card}{\card - 1} ( G_\rho(M) -1 )  (1 - K ) 
\end{aligned}
\end{equation}
where $K = K_\infty(X|Y) = \exp \left ( -H_\infty(X|Y) \right )$.
For $\rho =1$, as shown in~\cite{rioul2023interplay}, this simplifies to %showed that it simplifies to
$    
        ( \lfloor  K^{-1} \rfloor + 1   )  ( 1 - \frac{\lfloor  K^{-1} \rfloor}{2} K )
        \leq G(X|Y)
        \leq 1 + \frac{\card}{2} (1 - K).
$

For small leakages 
of uniform secrets with maximal leakage $I_\infty(K;Y) = H_\infty(K) - H_\infty(K|Y) \leq \log \frac{\card}{\card -1} $, this rewrites
$
    \frac{1}{2}  ( e^{I_\infty(K;Y)} - 1  ) \leq  \Delta G(K ; Y) \leq \frac{\card - 1}{2}  ( e^{I_\infty(K;Y)} - 1  ). 
$
As $I_\infty(K;Y) \rightarrow 0$ this yields the first order approximation,
\begin{equation}
    \tfrac{1}{2} \tfrac{I_\infty(K;Y)}{\log e} \lesssim \Delta G(K ; Y) 
   \lesssim \tfrac{\card - 1}{2}  \tfrac{I_\infty(K;Y)}{\log e}.
\end{equation}
The upper bound is linear in $I_\infty(K;Y)$ which agrees with the fact that in~\eqref{eq:GrhoHalpha1storder} the term in $\sqrt{\Delta H_\alpha(K;Y)}=\sqrt{I_\alpha(K;Y)}$ vanishes  as $\alpha\to+\infty$.

\end{example}

\subsection{Evaluation in the Hamming Weight Lekage Model}

A standard leakage model in side-channel analysis~\cite{%
standaert+2009unified,ChoudaryPopescu17,eloi2019info,eloi2019ches,wei2022mask,yi2021conditionalalpha,Masure2023nearlytight,Masure2023removing
}
 is the Hamming weight leakage model $Y = w_H(K) + N$, in which
the Hamming weight $w_H(K)$ of the binary representation of the secret byte $K \sim \mathcal{U}(M)$, $M=2^n$ leaks under additive Gaussian noise $N \sim \mathcal{N}(0,\sigma)$.
Fig.~\ref{fig:HW-model} compares our bounds to previous ones and to the numerical evaluation of the actual guessing advantage.
While the case $\alpha=1$ (Thm.~\ref{th:GH}) is slightly better than $\alpha = \frac{1}{2}$ (Thm.~\ref{th:GrhoHalpha}),  both bounds are very close to the exact value as noise increases, with huge gaps compared to previous Arikan's~\cite{arikan1996} and Rioul's~\cite{rioul2022variations} bounds which were used in~\cite{ChoudaryPopescu17,TanasescuChoudaryRioulPopescu21} for practical evaluation of full-key guessing entropy.

\begin{figure}[ht!]
    \captionsetup{skip=-0.2cm}
    \begin{center}
    \includegraphics*[width=0.4\textwidth,height=.25\textwidth]{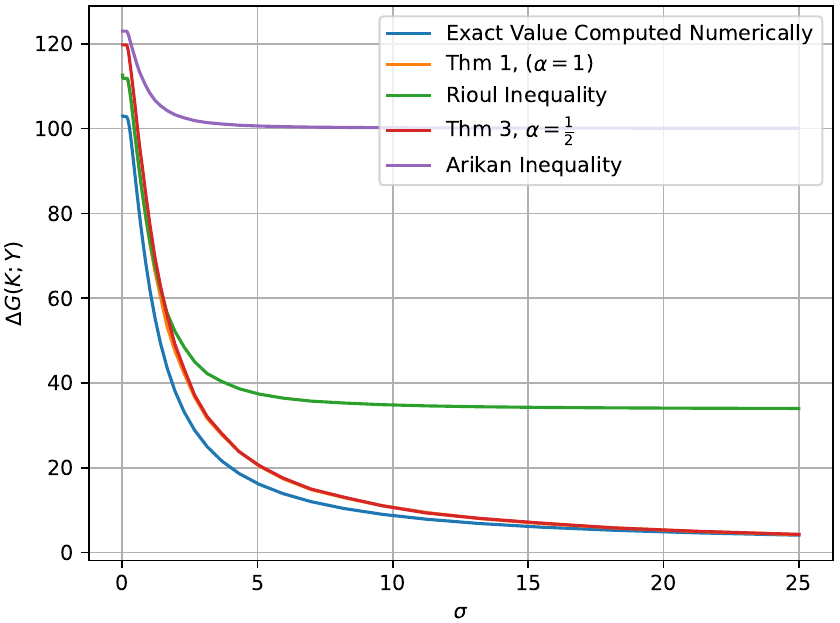}
    \end{center}
    \caption{Bound on the guessing advantage in the Hamming weight leakage model for increasing noise variance $\sigma^2$.}
    \label{fig:HW-model}
\end{figure}

%%%%%%%%%%%%%%%%%%%%%%%%%%%%%%%%%%%%%%%%%%%%%%%%%%
%%%%%%%%%%%%%%%%%%%%%%%%%%%%%%%%%%%%%%%%%%%%%%%%%%

\subsection{Evaluation in the Random Probing Model} 

The random probing model~\cite{DBLP:conf/eurocrypt/DucDF14,DBLP:conf/crypto/BelaidCPRT20,DBLP:conf/crypto/CassiersFOS21} is a well-known setting for security evaluation of cryptographic circuits. The side-channel is noiseless but has a random state: $Y \!=\!(Z,f_Z(K))$ where $f_z$ is a deterministic leakage function depending on state~$z$. Typically $Y$ is either erased or leaks $f(K,T)$ with publicly available plain or cyphertext~$T$.
\begin{theorem}\label{thm:improved-random-probing} In the random probing model,
    \begin{equation}
       1 + \tfrac{M-1}{2} \tfrac{ \exp  ( \tfrac{1-\alpha}{\alpha} H_\alpha(K|Y)  ) - 1}{M^{(1-\alpha)/{\alpha}}-1}
       \lessgtr  G(K|Y) \lessgtr \tfrac{1 + \exp ( H_\alpha(K|Y)) }{2}.
    \end{equation}
where $\lessgtr$ denotes $\geq $ for $\alpha \geq {1}/{2}$ and $\leq$ for $\alpha \leq {1}/{2}$.   
    In the limiting case $\alpha = 1$,
    \begin{equation}
        \frac{1+\exp H(K|Y)}{2} \leq G(K|Y) \leq 1 + \frac{M-1}{2} \frac{H(K|Y)}{\log M}.
    \end{equation} 
\end{theorem}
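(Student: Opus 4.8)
The plan is to specialize the analysis of the random probing channel and then feed the resulting structure into the binary-variable calculations from the Example on binary random variables. In the random probing model $Y=(Z,f_Z(K))$ with a noiseless but random leakage state: conditioned on $Y=y$, the posterior $p_{K|Y=y}$ is \emph{uniform on a subset} $S_y\subseteq\{1,\dots,M\}$, namely the preimage $f_z^{-1}(f_z(k))$ compatible with the observation. So for each $y$ we only have to understand the $G$ vs.\ $H_\alpha$ tradeoff for \emph{uniform} distributions on sets of varying size $s=|S_y|$, then average over $Y$. For a uniform distribution on $s$ symbols one has $G(K|Y=y)=\frac{s+1}{2}$ and $H_\alpha(K|Y=y)=\log s$, hence $K_\alpha(K|Y=y)=s^{(1-\alpha)/\alpha}$. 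Thus $G(K|Y=y)$ is an \emph{affine} function of $K_\alpha(K|Y=y)^{\alpha/(1-\alpha)}$ and a \emph{concave or convex} function of $K_\alpha(K|Y=y)$ depending on the sign of $(1-\alpha)/\alpha$ relative to $1$, i.e.\ depending on whether $\alpha\gtrless\frac12$; this is exactly the two-point reduction already exploited in the binary example, now with the two extreme atoms being "$s=1$" (Dirac) and "$s=M$" (fully uniform).

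The key steps, in order. First, write $G(K|Y)=\E_Y\frac{S+1}{2}=\frac{1+\E S}{2}$ where $S=|S_Y|$, and $K_\alpha(K|Y)=\E_Y S^{(1-\alpha)/\alpha}$. Second, for the upper bound: since $s\mapsto \frac{s+1}{2}$ is affine in $s$ and $s=\bigl(s^{(1-\alpha)/\alpha}\bigr)^{\alpha/(1-\alpha)}$, and since $t\mapsto t^{\alpha/(1-\alpha)}$ is convex on $t>0$ when $\alpha\ge\frac12$ (exponent $\ge1$) and concave when $\alpha\le\frac12$, Jensen gives $\E S\lessgtr \bigl(\E S^{(1-\alpha)/\alpha}\bigr)^{\alpha/(1-\alpha)}=K_\alpha(K|Y)^{\alpha/(1-\alpha)}=\exp H_\alpha(K|Y)$ (using $K_\alpha=\exp(\frac{1-\alpha}{\alpha}H_\alpha)$), which rearranges to $G(K|Y)\lessgtr\frac{1+\exp H_\alpha(K|Y)}{2}$ with the inequality direction as claimed. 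Third, for the lower bound: parametrize $s$ by the single variable $t=s^{(1-\alpha)/\alpha}\in\{1,\dots\}$, express $\frac{s+1}{2}=1+\frac{s-1}{2}$, and note $s$ ranges over $[1,M]$; the map $t\mapsto\frac{s-1}{2}$, viewed as a function of $t=K_\alpha$-contribution, is convex/concave in the opposite sense, so the extremal configuration supported on $\{t_{\min},t_{\max}\}=\{1,M^{(1-\alpha)/\alpha}\}$ gives the chord bound $G(K|Y)\gtrless 1+\frac{M-1}{2}\cdot\frac{K_\alpha(K|Y)-1}{M^{(1-\alpha)/\alpha}-1}$, and substituting $K_\alpha(K|Y)=\exp(\frac{1-\alpha}{\alpha}H_\alpha(K|Y))$ yields the stated left-hand inequality. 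Fourth, take $\alpha\to1$: $\exp H_\alpha(K|Y)\to\exp H(K|Y)$ for the (now reversed, since $\alpha\ge\frac12$) upper-vs-lower role, and $\frac{K_\alpha-1}{M^{(1-\alpha)/\alpha}-1}\to\frac{H(K|Y)/\log e}{\log M/\log e}=\frac{H(K|Y)}{\log M}$ by L'Hôpital, giving the two displayed limiting bounds.

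The main obstacle I anticipate is not the Jensen/chord mechanics but justifying \textbf{tightness and correct domain of validity}: I must argue that the two-point distributions on set-sizes $\{1,M\}$ (resp.\ the degenerate all-uniform configuration) are actually \emph{realizable} by a genuine random probing channel $Y=(Z,f_Z(K))$ on the uniform key $K\sim\mathcal U(M)$ — e.g.\ a channel that with some probability erases ($f_z$ constant, $s=M$) and otherwise fully reveals ($f_z$ injective, $s=1$) — so that the bounds are the \emph{optimal} ones and not merely valid. A secondary subtlety is the threshold $\alpha=\frac12$: the exponent $\alpha/(1-\alpha)$ equals $1$ there, Jensen degenerates to an equality, and one should check both one-sided limits $\alpha\to\frac12^\pm$ agree (they do: both bounds collapse since $K_{1/2}(K|Y)=\exp H_{1/2}(K|Y)$ up to the linear reparametrization), which also explains why the crossover in the $\lessgtr$ convention sits precisely at $\alpha=\frac12$. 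Finally one must handle the harmless point that $s$ is integer-valued while the chord/Jensen arguments only use $s\in[1,M]$, so the bounds hold a fortiori.
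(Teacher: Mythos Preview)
Your approach is essentially identical to the paper's: both observe that in the random probing model the posterior $p_{K|Y=y}$ is uniform on a set of size $s$, so that $G(K|Y=y)=\frac{s+1}{2}$ and $K_\alpha(K|Y=y)=s^{(1-\alpha)/\alpha}$, then apply Jensen's inequality to the convex/concave map $t\mapsto t^{\alpha/(1-\alpha)}$ (convex iff $\alpha\geq\frac12$) for one bound and take the chord over the endpoints $t\in\{1,M^{(1-\alpha)/\alpha}\}$ for the other, with the $\alpha\to1$ case recovered in the limit. Your extra remarks on realizability via the erase/reveal mixture and on the $\alpha=\frac12$ degeneracy go a bit beyond what the paper spells out, but the core argument is the same.
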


\begin{proof}
First, consider a fixed function $f = f_z$ with pre-image cardinality $M_y= |f^{-1}( \{ y \} )|$.
Then 
    $H_\alpha(K|Y=y) = \log M_y$ and
    $G(K|Y=y) = \frac{M_y + 1}{2}=\frac{ \exp ( H_\alpha(K|Y=y) ) + 1}{2}= \frac{ K_\alpha(K|Y=y)^{\frac{\alpha}{1-\alpha}} + 1}{2}.$
If $\alpha = 1$, since $x \mapsto \frac{1 + \exp (x) }{2}$ is convex we obtain by Jensen's inequality that 
$G(K|Y) \geq  \frac{\exp \left ( H(K|Y) \right ) + 1}{2}$.
If $\alpha \neq 1$,  $x \mapsto \frac{1 + x^{ \frac{\alpha}{1-\alpha} } }{2}$ is convex 
if $\alpha \geq \frac{1}{2}$ and concave otherwise,
hence $G(K|Y) \lessgtr \frac{ K_\alpha(K|Y)^{\frac{\alpha}{1-\alpha}} + 1}{2}$.
The other bound is obtained by taking the chord of $x \mapsto \frac{1 + x^{ \frac{\alpha}{1-\alpha} } }{2}$.
The region when $f_z$ is obtained at random is obtained by taking the convex hull of the region $(K_\alpha,G)$, which is already convex.
\end{proof}

Fig.~\ref{fig:illus} shows that Theorem~\ref{thm:improved-random-probing} greatly improves Theorem~\ref{th:GrhoHalpha} in the random probing model.
Interestingly, the case $\alpha = \frac{1}{2}$ %is especially interesting as it 
gives equality
$\Delta G(K ; Y)  = \frac{M}{2} ( 1 - \exp ( - I_{\frac{1}{2}}(K;Y) ) ) \approx \frac{M}{2} \frac{I_{\frac{1}{2}}(K;Y)}{\log e}$ where the approximation holds for small leakages.

\begin{figure}[ht!]
    \centering

    \begin{subfigure}{0.4\textwidth}
    \captionsetup{skip=-0.3cm}
    \includegraphics[width=\textwidth,height=.65\textwidth]{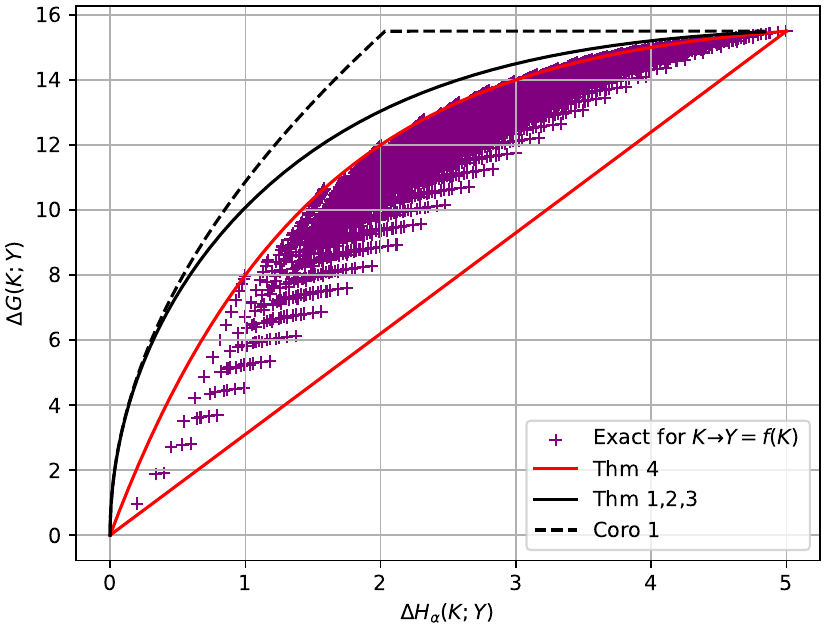}
    \label{fig:illustration_card=8}
    \caption{$\alpha=1$}
    \end{subfigure}

    \begin{subfigure}{0.4\textwidth}
    \captionsetup{skip=-0.3cm}
    \includegraphics[width=\textwidth,height=.65\textwidth]{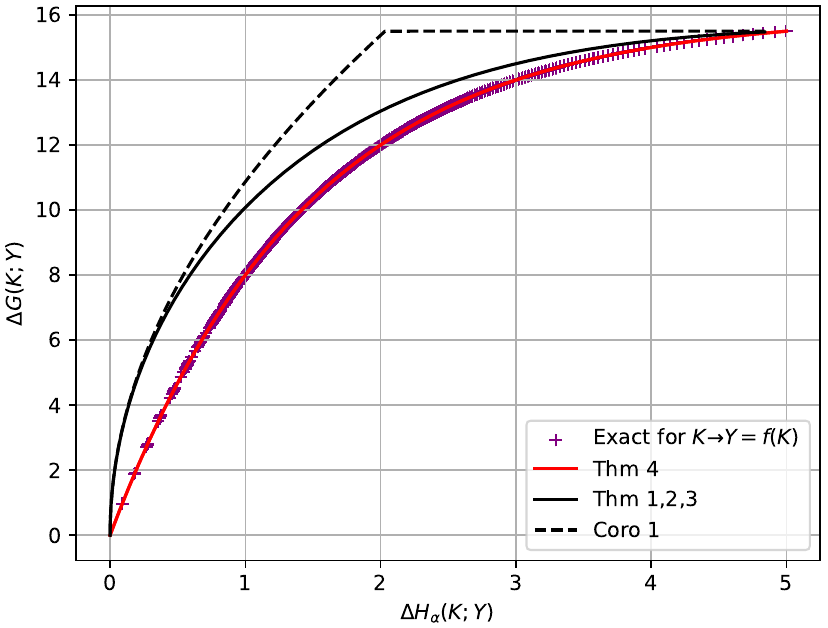}
    \label{fig:illustration_card=32}
    \caption{$\alpha=\frac{1}{2}$}
    \end{subfigure}
    
    \caption{\small Improved bound ($M=32$) from Theorem~\ref{thm:improved-random-probing} (solid, red) compared to Theorem~\ref{th:GrhoHalpha} (solid, black), Corollary~\ref{thm:explicit} (dashed), and scatter plot of exact  values %scattered point corresponds to exact values 
    when $Y=f(K)$ for arbitrarily given functions~$f$.
    }
    \label{fig:illus}
\end{figure}

%%%%%%%%%%%%%%%%%%%%%%%%%%%%%%%%%%%%%%%%%%%%%%%%%%
%%%%%%%%%%%%%%%%%%%%%%%%%%%%%%%%%%%%%%%%%%%%%%%%%%

\section{Conclusion}
% !TEX root = main.tex

We derived closed-form optimal regions with explicit bounds on the guessing advantage $\Delta G_\rho$ of a secret random variable in terms of  the $\alpha$-information advantage $\Delta H_\alpha$. %a Rényi-Arimoto $\alpha$-entropy.
An important outcome is that it decreases as the square root $\Delta G_\rho\leq c \sqrt{\Delta H_\alpha}$ for small leakages.
%One of our main result is that the guessing advantage decreases as the square root of the information  advantage for small leakages.
%
Simulations in the classical Hamming weight leakage model show that our bounds are much tighter than previous ones, especially for large noise (small leakage).
%used in side channel analysis 
%and showed that they are close to the real value of the guessing entropy (while this is not the case in previous bounds). 
%
We further sharpened the bounds %when the side information
 in the random probing model where the guessing advantage is actually \emph{equal} to 
%In this setting, we even obtained an equality between the guessing advantage and 
a function of the Rényi-Arimoto entropy of order~$1/2$.

As possible extensions of this work, it would be valuable to obtain
sharpened bounds for any additive noise leakage model $Y = f(X) + Z$, for secrets with known (nonuniform) prior pmf $p_X$ and generalize to negative values of $\alpha$~\cite{esposito2022sibson}.

%As perspectives, obtaining sharper bounds on the guessing advantage of  non necessarily uniform secrets $X$ whose a priori pmf is given would be interesting. 
%
%The derivation of sharper bounds when the side channel is known to be of the form $Y = f(X) + Z$ where $f$ is a given deterministic function and $Z$ is an independent additive noise would also be valuable. 
%
%Finally it would be interesting to investigate negative entropy order $\alpha < 0$.
%
\cleardoublepage

%%%%%%
%% To balance the columns at the last page of the paper use this
%% command:
%%
%\enlargethispage{-1.2cm} 
%%
%% If the balancing should occur in the middle of the references, use
%% the following trigger:
%%
\IEEEtriggeratref{18}
%%
%% which triggers a \newpage (i.e., new column) just before the given
%% reference number. Note that you need to adapt this if you modify
%% the paper.  The "triggered" command can be changed if desired:
%%
%\IEEEtriggercmd{\enlargethispage{-20cm}}
%%
%%%%%%

\bibliographystyle{IEEEtran} 
\bibliography{main}

\end{document}